\tikzstyle{env}=[copoint,regular polygon rotate=0,minimum width=0.2cm, fill=black]
\tikzstyle{every picture}=[baseline=-0.25em]
\tikzstyle{dotpic}=[scale=0.5]
\tikzstyle{diredges}=[every to/.style={diredge}]
\tikzstyle{dot graph}=[shorten <=-0.1mm,shorten >=-0.1mm,scale=0.6]
\tikzstyle{plot point}=[circle,fill=black,minimum width=2mm,inner sep=0]
\tikzstyle{braceedge}=[decorate,decoration={brace,amplitude=2mm,raise=-1mm}]
\tikzstyle{small braceedge}=[decorate,decoration={brace,amplitude=1mm,raise=-1mm}]
\tikzstyle{left hook arrow}=[left hook-latex]
\tikzstyle{right hook arrow}=[right hook-latex]
\tikzstyle{dtriangle}=[fill=yellow,draw=black,shape=isosceles triangle,shape border rotate=-90,isosceles triangle stretches=true,inner sep=1pt,minimum width=0.4cm,minimum height=3mm]
\tikzstyle{vtriang}=[fill=yellow,draw=black,shape=isosceles triangle,shape border rotate=180,isosceles triangle stretches=true,inner sep=1pt,minimum width=0.4cm,minimum height=3mm]
\tikzstyle{triangle}=[fill=yellow,draw=black,shape=isosceles triangle,shape border rotate=90,isosceles triangle stretches=true,inner sep=1pt,minimum width=0.4cm,minimum height=3mm]
\tikzstyle{H box}=[rectangle,fill=yellow,draw=black,xscale=1.0,yscale=1.0, inner sep=1.pt]
\tikzstyle{gbox}=[rectangle,fill=green,draw=black,xscale=1.0,yscale=1.0, inner sep=1.pt]
\tikzstyle{rbox}=[rectangle,fill=red,draw=black,xscale=1.0,yscale=1.0, inner sep=1.pt]
\tikzstyle{bn}=[circle,fill=black,draw=black,scale=.4]
\tikzstyle{wn}=[circle,fill=white,draw=black,scale=.6]
\tikzstyle{dn}=[circle,fill=none,draw=gray]
\tikzstyle{black dot}=[inner sep=0.7mm,minimum width=0pt,minimum height=0pt,fill=black,draw=black,shape=circle]
\tikzstyle{dot}=[black dot]
\tikzstyle{smalldot}=[inner sep=0.4mm,minimum width=0pt,minimum height=0pt,fill=black,draw=black,shape=circle]
\tikzstyle{white dot}=[dot,fill=white]
\tikzstyle{antipode}=[white dot,inner sep=0.3mm,font=\footnotesize]
\tikzstyle{smallwhitedot}=[smalldot,fill=white]
\tikzstyle{alt white dot}=[white dot,label={[xshift=3.07mm,yshift=-0.05mm,font=\footnotesize]left:$*$}]
\tikzstyle{gray dot}=[dot,fill=gray!40!white]
\tikzstyle{smallgraydot}=[smalldot,fill=gray!40!white]
\tikzstyle{box vertex}=[draw=black,rectangle]
\tikzstyle{small box}=[box vertex,fill=white]
\tikzstyle{whitebg}=[fill=white,inner sep=2pt]
\tikzstyle{graph state vertex}=[sg vertex,fill=black]
\tikzstyle{wide copoint}=[fill=white,draw=black,shape=isosceles triangle,shape border rotate=90,isosceles triangle stretches=true,inner sep=1pt,minimum width=1.5cm,minimum height=5mm]
\tikzstyle{wide point}=[fill=white,draw=black,shape=isosceles triangle,shape border rotate=-90,isosceles triangle stretches=true,inner sep=1pt,minimum width=1.5cm,minimum height=4mm]
\tikzstyle{very wide copoint}=[fill=white,draw=black,shape=isosceles triangle,shape border rotate=-90,isosceles triangle stretches=true,inner sep=1pt,minimum width=2.5cm,minimum height=4mm]
\tikzstyle{very wide empty copoint}=[draw=black,shape=isosceles triangle,shape border rotate=-90,isosceles triangle stretches=true,inner sep=1pt,minimum width=2.5cm,minimum height=4mm]
\tikzstyle{symm}=[ultra thick,shorten <=-1mm,shorten >=-1mm]
\tikzstyle{small box}=[rectangle,inline text,fill=white,draw,minimum height=5mm,yshift=-0.5mm,minimum width=5mm,font=\small]
\tikzstyle{square box}=[rectangle,fill=white,draw=black,minimum height=5mm,minimum width=5mm,font=\small]
\tikzstyle{square gray box}=[rectangle,fill=gray!30,draw=black,minimum height=6mm,minimum width=6mm]
\tikzstyle{copoint}=[regular polygon,regular polygon sides=3,draw=black,scale=0.75,inner sep=-0.5pt,minimum width=7mm,fill=white]
\tikzstyle{point}=[regular polygon,regular polygon sides=3,draw=black,scale=0.75,inner sep=-0.5pt,minimum width=7mm,fill=white,regular polygon rotate=180]
\tikzstyle{gray point}=[point,fill=gray!40!white]
\tikzstyle{gray copoint}=[copoint,fill=gray!40!white]
\newcommand{\edgearrow}{{\arrow[black]{>}}}
\newcommand{\edgetick}{{\arrow[black,scale=0.7,very thick]{|}}}
\tikzstyle{diredge}=[->]
\tikzstyle{rdiredge}=[<-]
\tikzstyle{medium diredge}=[->]
\tikzstyle{short diredge}=[->]
\tikzstyle{halfedge}=[-)]
\tikzstyle{other halfedge}=[(-]
\tikzstyle{freeedge}=[(-)]
\tikzstyle{white edge}=[line width=5pt,white]
\tikzstyle{tick}=[postaction=decorate,decoration={markings, mark=at position 0.5 with \edgetick}]
\tikzstyle{small map edge}=[|-latex, gray!60!blue, shorten <=0.9mm, shorten >=0.5mm]
\tikzstyle{thick dashed edge}=[very thick,dashed,gray!40]
\tikzstyle{map edge}=[|-latex,very thick, gray!40, shorten <=1mm, shorten >=0.5mm]
\tikzstyle{tickedge}=[postaction=decorate,
\tikzstyle{dirtickedge}=[postaction=decorate,
\tikzstyle{dirdoubletickedge}=[postaction=decorate,
\newcommand{\boxshape}[3]{%
\pgfdeclareshape{#1}{
\inheritsavedanchors[from=rectangle] 
\inheritanchorborder[from=rectangle]
\inheritanchor[from=rectangle]{center}
\inheritanchor[from=rectangle]{north}
\inheritanchor[from=rectangle]{south}
\inheritanchor[from=rectangle]{west}
\inheritanchor[from=rectangle]{east}
\backgroundpath{
\southwest \pgf@xa=\pgf@x \pgf@ya=\pgf@y
\northeast \pgf@xb=\pgf@x \pgf@yb=\pgf@y

\@tempdima=#2
\@tempdimb=#3

\pgfpathmoveto{\pgfpoint{\pgf@xa - 5pt + \@tempdima}{\pgf@ya}}
\pgfpathlineto{\pgfpoint{\pgf@xa - 5pt - \@tempdima}{\pgf@yb}}
\pgfpathlineto{\pgfpoint{\pgf@xb + 5pt + \@tempdimb}{\pgf@yb}}
\pgfpathlineto{\pgfpoint{\pgf@xb + 5pt - \@tempdimb}{\pgf@ya}}
\pgfpathlineto{\pgfpoint{\pgf@xa - 5pt + \@tempdima}{\pgf@ya}}
\pgfpathclose
}
}}
\tikzstyle{map}=[draw,shape=NEbox,inner sep=7pt]
\tikzstyle{mapdag}=[draw,shape=SEbox,inner sep=7pt]
\tikzstyle{maptrans}=[draw,shape=SWbox,inner sep=7pt]
\tikzstyle{mapconj}=[draw,shape=NWbox,inner sep=7pt]
\tikzstyle{inline text}=[text height=1.5ex, text depth=0.25ex,yshift=0.5mm]
\tikzstyle{medium box}=[rectangle,inline text,fill=white,draw,minimum height=3.5mm,yshift=-0.5mm,minimum width=10mm,font=\small]
\tikzstyle{semilarge box}=[rectangle,inline text,fill=white,draw,minimum height=3.5mm,yshift=-0.5mm,minimum width=12.5mm,font=\small]
\tikzstyle{right label}=[label,anchor=west,xshift=-1.5mm]
\tikzstyle{probs}=[shape=semicircle,fill=gray!40!white,draw=black,shape border rotate=180,minimum width=1.2cm]
\tikzstyle{arrs}=[-latex,font=\small,auto]
\tikzstyle{arrow plain}=[arrs]
\tikzstyle{arrow dashed}=[dashed,arrs]
\tikzstyle{arrow bold}=[very thick,arrs]
\tikzstyle{arrow hide}=[draw=white!0,-]
\tikzstyle{arrow reverse}=[latex-]
\tikzstyle{cdnode}=[]
\tikzstyle{gn}=[dot,fill=green,minimum width=0.2cm,inner sep=0pt]
\tikzstyle{rn}=[dot,fill=red,inner sep=0pt,minimum width=0.2cm]
\tikzstyle{rc}=[dot,thick,fill=white,draw = red,minimum width=0.3cm,inner sep=0pt]
\tikzstyle{gc}=[dot,thick,fill=white,draw= green,inner sep=0pt,minimum width=0.3cm]
\tikzstyle{bc}=[dot,thick,fill=white,draw= blue,minimum width=0.3cm]
\tikzstyle{label}=[circle,fill=white,minimum width=0.3cm]
\tikzstyle{clocklabel}=[dot,fill=yellow,draw=black,font=\tiny,inner sep=0.75pt]
\tikzstyle{rsn}=[circle split,draw,fill=red,font=\tiny,inner sep=0.75pt]
\tikzstyle{gsn}=[circle split,draw,fill=green,font=\tiny,inner sep=0.75pt]
\tikzstyle{bsn}=[circle split,draw,fill=blue,font=\tiny,inner sep=0.75pt]
\tikzstyle{rsc}=[circle split,thick,draw= red,draw,fill=white,font=\tiny,inner sep=0.75pt]
\tikzstyle{gsc}=[circle split,thick,draw= green,draw,fill=white,font=\tiny,inner sep=0.75pt]
\tikzstyle{bsc}=[circle split,thick,draw= blue,draw,fill=white,font=\tiny,inner sep=0.75pt]
\tikzstyle{cnot}=[fill=white,shape=circle,inner sep=-1.4pt]
\tikzstyle{wire label}=[font=\tiny, auto]
\newcommand{\bra}[1]{\ensuremath{\left\langle #1 \right|}}
\newcommand{\ket}[1]{\ensuremath{\left|  #1 \right\rangle}}
\tikzstyle{cdiag}=[matrix of math nodes, row sep=3em, column sep=3em, text height=1.5ex, text depth=0.25ex,inner sep=0.5em]
\tikzstyle{arrow above}=[transform canvas={yshift=0.5ex}]
\tikzstyle{arrow below}=[transform canvas={yshift=-0.5ex}]
\def\bR{\begin{color}{red}}  
\def\bB{\begin{color}{blue}}  
\def\bM{\begin{color}{magenta}} 
\def\bGr{\begin{color}{darkgray}}
\def\bC{\begin{color}{cyan}}
\def\bW{\begin{color}{white}}
\def\bBl{\begin{color}{black}}
\def\bG{\begin{color}{green}}
\def\bY{\begin{color}{yellow}}
\def\jR{\begin{color}{magenta}}
\def\jB{\begin{color}{cyan}}
\def\e{\end{color}}
\title{ZX-Rules for 2-qubit Clifford+T Quantum Circuits}
\author{Bob Coecke \and Quanlong Wang}
\institute{University of Oxford \\
\email{\{Bob.Coecke, Quanlong.Wang\}@cs.ox.ac.uk}}  
\begin{document} 
\maketitle

\begin{abstract}
 ZX-calculus is a high-level graphical formalism for qubit computation. In this paper we give the ZX-rules that enable one to derive all equations between 2-qubit Clifford+T quantum circuits.  Our rule set is only a small extension of the rules of stabiliser ZX-calculus, and substantially less than those needed for the recently achieved universal completeness.  One of our rules is new, and we expect it to also have other utilities.   

These ZX-rules are  much simpler than the complete of set Clifford+T circuit equations due to Selinger and Bian, which indicates that ZX-calculus  provides a more convenient arena for quantum circuit rewriting than restricting oneself to circuit equations.  The reason for this is that ZX-calculus is not constrained by a fixed unitary gate set for performing intermediate computations.  
\end{abstract} 

\section{Introduction}

The ZX-calculus \cite{CD1,CD2} 
is a universal graphical language for qubit  theory, which comes equipped with simple rewriting rules that enable one to transform diagrams representing one quantum process into  another quantum process. More broadly, it is part of categorical quantum mechanics which aims for a high-level formulation of quantum theory \cite{AC1,CKbook}. It has found applications both in quantum foundations \cite{CDKZ,CDKZ2,MiriamSpek} and quantum computation \cite{DP2,Clare,DomError,de2017zx}, 
and is subject to automation thanks to the Quantomatic software \cite{quanto-cade}.   Recently ZX-calculus has been completed by Ng and Wang \cite{ng2017universal}, that is, provided with sufficient additional rules so that any equation between matrices in Hilbert space can be derived in ZX-calculus.  This followed  earlier completions  by Backens   for  stabiliser theory \cite{Backens} and one-qubit Clifford+T circuits \cite{Backens2},  and   by Jeandel, Perdrix and Vilmart for general Clifford+T theory \cite{jeandel2017complete}. In Section \ref{sec:ZX-rules} we present Backens' two theorems.
 
This paper concerns a sufficient set of  ZX-rules for establishing all equations between 2-qubit Clifford+T quantum circuits, which  again can be seen as a completeness result.  
We were motivated in two manners to seek  this result: 
\begin{itemize}
\item Firstly, we wish to understand the utility of the ZX-rules.  In the case of the full completion \cite{ng2017universal,jeandel2018diagrammatic} these were added using a purely theoretical methodology which consisted of translating Hilbert space structure into diagrams, passing via another graphical calculus \cite{Amar,hadzihasanovic2017algebra}. 
 However, a natural question concerns  the actual practical use of each of these rules, as well as of other rules derived from them.   As an example, one of the key ZX-rules: 
 \[
\beginpgfgraphicnamed{diagrams/b2s}
\InputIfFileExists{diagrams/b2s.tikz}{}{\input{./figures/diagrams/b2s.tikz}}
\endpgfgraphicnamed
 \]
is equivalent to the following well known circuit equation \cite{CD2}: 
 \[
\beginpgfgraphicnamed{diagrams/strongcomplementary1CNOT}
\InputIfFileExists{diagrams/strongcomplementary1CNOT.tikz}{}{\input{./figures/diagrams/strongcomplementary1CNOT.tikz}}
\endpgfgraphicnamed
 \]
 involving CNOT gates (green $\simeq$ control).  In this paper we are concerned with  all such equations for 2-qubit Clifford+T quantum circuits.
\item Secondly, in quantum computing algorithms are converted into elementary gates making up circuits, and these circuits then have to be implemented on a computer. Currently the most considered universal set of elementary gates is the Clifford+T gate set. The high cost of implementing those gates makes  any simplification of a circuit (cf.~having less CNOT-gates and/or having less T-gates) highly desirable.  We expect our result to be an important stepping stone towards efficient simplification of arbitrary n-qubit  Clifford+T circuits, and that the quantomatic software will be a crucial part of this. The fact that a small set of rules suffices for us here raises the hope that general circuit simplification could already be done with a small set of ZX-rules.
\end{itemize}

Selinger and Bian derived a complete set of circuit equations for Clifford+T 2-qubit circuits \cite{ptbian}.
However, these circuit equations are very large and rigid, and their method for producing these beyond two-qubits doesn't scale to more qubits.  On the other hand, in the case of ZX-calculus we already have an overarching completeness results that carries over to circuits of arbitrary qubits.  So the main question then concerns the rules needed specifically for efficient circuit rewriting.

The advantage of ZX-rules is that they are not constrained by unitarity.  Also, in the ZX computation at intermediate stages phase gates may not even be within Clifford+T, although their actual values play no roles, that is, they can be treated as variables.  Note that going beyond the constraints of the formalism which one aims to prove something about is a standard practice in mathematics, e.g.~complex analysis.


\section{Background 1: ZX-calculus language} 

A pedestrian introduction is \cite{coecke2012tutorial}.  There are two ways to present ZX-calculus, either as \emph{diagrams} or as a \emph{category}.  Following \cite{CKbook}, the `language' of the ZX-calculus consists of certain special \emph{processes} or \emph{boxes}:
 \[
\beginpgfgraphicnamed{diagrams//box}
\InputIfFileExists{diagrams//box.tikz}{}{\input{./figures/diagrams//box.tikz}}
\endpgfgraphicnamed
 \]
 which can be wired together to form \emph{diagrams}:
 \[
\beginpgfgraphicnamed{diagrams//compound-process-capscups}
\InputIfFileExists{diagrams//compound-process-capscups.tikz}{}{\input{./figures/diagrams//compound-process-capscups.tikz}}
\endpgfgraphicnamed
 \]
All the diagrams should be read from top to bottom. Note that the wiring of inputs to inputs and outputs to inputs, as well as feed-back loops is admitted.  Equivalently, following \cite{CD2}, it consists of certain morphisms in a compact closed category, which has the natural numbers: $0, 1, 2,  \cdots$   as objects, with the addition of numbers as the tensor:
\[
m \otimes n = m+n 
\] 
In diagrams $n$ corresponds to $n$ wires side-by-side.

The special processes/boxes/morphisms that we are concerned with in this paper are \emph{spiders} of two \emph{colours}:
 \[
\beginpgfgraphicnamed{diagrams//spider_green_alpha}
\InputIfFileExists{diagrams//spider_green_alpha.tikz}{}{\input{./figures/diagrams//spider_green_alpha.tikz}}
\endpgfgraphicnamed\qquad%
\beginpgfgraphicnamed{diagrams//spider_red_alpha}
\InputIfFileExists{diagrams//spider_red_alpha.tikz}{}{\input{./figures/diagrams//spider_red_alpha.tikz}}
\endpgfgraphicnamed
 \]
where $\alpha\in[0, 2\pi)$. Equivalently, one can only consider spiders of one colour as well as a colour changer  (cf.~rule (H2) below):
 \[
\beginpgfgraphicnamed{diagrams//HadaDecomSingleslt}
\begin{tikzpicture}
	\begin{pgfonlayer}{nodelayer}
		\node [style=H box] (0) at (-0.75, 0) {$H$};
		\node [style=none] (1) at (-0.75, -0.25) {};
		\node [style=none] (2) at (-0.75, 0.25) {};
	\end{pgfonlayer}
	\begin{pgfonlayer}{edgelayer}
		\draw (2.center) to (0);
		\draw (1.center) to (0);
	\end{pgfonlayer}
\end{tikzpicture}}
\endpgfgraphicnamed  
 \]
ZX-calculus can also be seen as a calculus of graphs, provided that one introduces special input and output nodes. 

Sometimes it is useful to also think of wires appearing in the diagram as boxes, which can take the following forms:
\[
\beginpgfgraphicnamed{diagrams//Id}
\begin{tikzpicture}
	\begin{pgfonlayer}{nodelayer}
		\node [style=none] (1) at (0.5, 0.3) {};
		\node [style=none] (2) at (0.5, -0.3) {};
		\node [style=none] (3) at (0.5, -0.5) {};
		\node [style=none] (4) at (0.5, 0.5) {};
	\end{pgfonlayer}
	\begin{pgfonlayer}{edgelayer}
		\draw (1.center) to (2.center);
	\end{pgfonlayer}
\end{tikzpicture}}
\endpgfgraphicnamed\qquad\quad%
\beginpgfgraphicnamed{diagrams//swap}
\InputIfFileExists{diagrams//swap.tikz}{}{\input{./figures/diagrams//swap.tikz}}
\endpgfgraphicnamed\qquad\quad%
\beginpgfgraphicnamed{diagrams//cup}
\begin{tikzpicture}
	\begin{pgfonlayer}{nodelayer}
		\node [style=none] (0) at (-0.5, 0.25) {};
		\node [style=none] (1) at (0.5, 0.25) {};
	\end{pgfonlayer}
	\begin{pgfonlayer}{edgelayer}
		\draw [bend right=90, looseness=1.50] (0.center) to (1.center);
	\end{pgfonlayer}
\end{tikzpicture}}
\endpgfgraphicnamed\qquad\quad %
\beginpgfgraphicnamed{diagrams//cap}
\begin{tikzpicture}
	\begin{pgfonlayer}{nodelayer}
		\node [style=none] (0) at (-0.5, -0.25) {};
		\node [style=none] (1) at (0.5, -0.25) {};
	\end{pgfonlayer}
	\begin{pgfonlayer}{edgelayer}
		\draw [bend left=90, looseness=1.50] (0.center) to (1.center);
	\end{pgfonlayer}
\end{tikzpicture}}
\endpgfgraphicnamed
\]
In particular, then the full specification of what `wiring boxes together' actually means can be reduced to what it means to put boxes side-by-side and connect the output of a box to the input of another box:
\[
\beginpgfgraphicnamed{diagrams//box_par}
\InputIfFileExists{diagrams//box_par.tikz}{}{\input{./figures/diagrams//box_par.tikz}}
\endpgfgraphicnamed\qquad\raisebox{1mm}{%
\beginpgfgraphicnamed{diagrams//box_seq}
\InputIfFileExists{diagrams//box_seq.tikz}{}{\input{./figures/diagrams//box_seq.tikz}}
\endpgfgraphicnamed}
\]
The following key property uses this fact: 

\begin{theorem}\cite{ContPhys,CD2}
The ZX language is \emph{universal} for qubit quantum computing, when giving the following interpretation:
\[
\left\llbracket %
\beginpgfgraphicnamed{diagrams//generator_spider_alpha}
\InputIfFileExists{diagrams//generator_spider_alpha.tikz}{}{\input{./figures/diagrams//generator_spider_alpha.tikz}}
\endpgfgraphicnamed \right\rrbracket=\ket{0}^{\otimes m}\bra{0}^{\otimes n}+e^{i\alpha}\ket{1}^{\otimes m}\bra{1}^{\otimes n}  \qquad 
\left\llbracket%
\beginpgfgraphicnamed{diagrams//HadaDecomSingleslt}
}
\endpgfgraphicnamed\right\rrbracket=\frac{1}{\sqrt{2}}\begin{pmatrix}
        1 & 1 \\
        1 & -1
 \end{pmatrix}\\
 \]
 \[ 
\left\llbracket%
\beginpgfgraphicnamed{diagrams//Id}
}
\endpgfgraphicnamed\right\rrbracket=
\begin{pmatrix}
        1 & 0 \\
        0 & 1
 \end{pmatrix}
 \qquad 
   \left\llbracket%
\beginpgfgraphicnamed{diagrams//swap}
\InputIfFileExists{diagrams//swap.tikz}{}{\input{./figures/diagrams//swap.tikz}}
\endpgfgraphicnamed\right\rrbracket=
   \begin{pmatrix}
        1 & 0 & 0 & 0 \\
        0 & 0 & 1 & 0 \\
        0 & 1 & 0 & 0 \\
        0 & 0 & 0 & 1 
 \end{pmatrix}  
 \qquad 
  \left\llbracket%
\beginpgfgraphicnamed{diagrams//cap}
}
\endpgfgraphicnamed\right\rrbracket=
  \begin{pmatrix}
        1  \\
        0  \\
        0  \\
        1  \\
 \end{pmatrix}
 \qquad 
   \left\llbracket%
\beginpgfgraphicnamed{diagrams//cup}
}
\endpgfgraphicnamed\right\rrbracket=
   \begin{pmatrix}
        1 & 0 & 0 & 1 
         \end{pmatrix}
\]         
\[
  \left\llbracket %
\beginpgfgraphicnamed{diagrams//box_par}
\InputIfFileExists{diagrams//box_par.tikz}{}{\input{./figures/diagrams//box_par.tikz}}
\endpgfgraphicnamed  \right\rrbracket =  \left\llbracket %
\beginpgfgraphicnamed{diagrams//box_f}
\InputIfFileExists{diagrams//box_f.tikz}{}{\input{./figures/diagrams//box_f.tikz}}
\endpgfgraphicnamed  \right\rrbracket \otimes  \left\llbracket  %
\beginpgfgraphicnamed{diagrams//box_g}
\InputIfFileExists{diagrams//box_g.tikz}{}{\input{./figures/diagrams//box_g.tikz}}
\endpgfgraphicnamed  \right\rrbracket  \qquad 
 \left\llbracket \raisebox{1mm}{%
\beginpgfgraphicnamed{diagrams//box_seq}
\InputIfFileExists{diagrams//box_seq.tikz}{}{\input{./figures/diagrams//box_seq.tikz}}
\endpgfgraphicnamed}  \right\rrbracket =  \left\llbracket %
\beginpgfgraphicnamed{diagrams//box_f}
\InputIfFileExists{diagrams//box_f.tikz}{}{\input{./figures/diagrams//box_f.tikz}}
\endpgfgraphicnamed  \right\rrbracket \circ  \left\llbracket  %
\beginpgfgraphicnamed{diagrams//box_g}
\InputIfFileExists{diagrams//box_g.tikz}{}{\input{./figures/diagrams//box_g.tikz}}
\endpgfgraphicnamed  \right\rrbracket
  \]
 That is, every linear map of type $\mathbb{C}^{2^n}\to\mathbb{C}^{2^m}$  can be written down as a ZX-diagram, and consequently, every qubit  process can be written down as a ZX-diagram.
\end{theorem}

\section{Background 2: ZX-calculus rules}\label{sec:ZX-rules}

 Above we specified the ingredients of the ZX-calculus as linear maps.  Now, in quantum theory linear maps only matter up to a non-zero scalar multiple, i.e.~a diagram with no inputs nor outputs.  We will do so too here, since this makes that the rules of the ZX-calculus appear much simpler (see e.g.~\cite{backens2017towards} for a presentation of the ZX-calculus rules with explicit scalars that make equations hold on-the-nose).   

 Due to the diagrammatic underpinning, in addition to the rules given below, there is one meta-rule that ZX-calculus obeys, namely:
\begin{center}
\fbox{\it Only connectedness matters!}
\end{center}
One could do without it by adding a few more rules, but it is entirely within the spirit of diagrammatic reasoning that it should all boil down to connectedness. We now give an overview of ZX-rule sets that have been considered. 

\emph{Stabiliser ZX-calculus} is the restriction of ZX-calculus to $\alpha\in\{{n\pi\over 2}\mid n\in\mathbb{N}\}$.  As shown in \cite{Backens}, the following rules make ZX-calculus complete for this fragment of quantum theory:
\[ 
\begin{tabular}{ccccc}
\beginpgfgraphicnamed{diagrams/spider-bis}
\InputIfFileExists{diagrams/spider-bis.tikz}{}{\input{./figures/diagrams/spider-bis.tikz}}
\endpgfgraphicnamed&\quad(S1)&$\qquad$&%
\beginpgfgraphicnamed{diagrams/greenredidentity}
\InputIfFileExists{diagrams/greenredidentity.tikz}{}{\input{./figures/diagrams/greenredidentity.tikz}}
\endpgfgraphicnamed&\quad(S2)\\
  &&\ \ && \\
\beginpgfgraphicnamed{diagrams/b1}
\InputIfFileExists{diagrams/b1.tikz}{}{\input{./figures/diagrams/b1.tikz}}
\endpgfgraphicnamed&\quad(B1)&&%
\beginpgfgraphicnamed{diagrams/b2s}
\InputIfFileExists{diagrams/b2s.tikz}{}{\input{./figures/diagrams/b2s.tikz}}
\endpgfgraphicnamed&\quad(B2)\\
  &&\ \ && \\
\beginpgfgraphicnamed{diagrams/hadamdecom}
\InputIfFileExists{diagrams/hadamdecom.tikz}{}{\input{./figures/diagrams/hadamdecom.tikz}}
\endpgfgraphicnamed&\quad(H1)&&%
\beginpgfgraphicnamed{diagrams/clchge}
\InputIfFileExists{diagrams/clchge.tikz}{}{\input{./figures/diagrams/clchge.tikz}}
\endpgfgraphicnamed&\quad(H2)
\end{tabular}
\]

That is, any equation between stabiliser ZX-diagrams that can be proven using matrices can also be proved by using these rules. 

 The `only connectivity matters rule' means that  we also have \cite{backens2017towards}:
\[
\beginpgfgraphicnamed{diagrams/induced_compact_structure}
\InputIfFileExists{diagrams/induced_compact_structure.tikz}{}{\input{./figures/diagrams/induced_compact_structure.tikz}}
\endpgfgraphicnamed\hfill(S2')
\]
Some  other derivable rules that we will use are: 
\[
\beginpgfgraphicnamed{diagrams/Hopfx}
\InputIfFileExists{diagrams/Hopfx.tikz}{}{\input{./figures/diagrams/Hopfx.tikz}}
\endpgfgraphicnamed   \quad(Hf)  \qquad %
\beginpgfgraphicnamed{diagrams//hexagon2}
\InputIfFileExists{diagrams//hexagon2.tikz}{}{\input{./figures/diagrams//hexagon2.tikz}}
\endpgfgraphicnamed\quad(Hex) \qquad   %
\beginpgfgraphicnamed{diagrams/Cy}
\InputIfFileExists{diagrams/Cy.tikz}{}{\input{./figures/diagrams/Cy.tikz}}
\endpgfgraphicnamed   \quad (Cy) 
\] 
where  the dots in (Cy) denote zero or more wires. The 1st and last rule are derived in  \cite{CD2} and the middle one in \cite{DP1}.
We also use the following  variation form of (B2), to which we also refer as (B2): 
\[ 
\beginpgfgraphicnamed{diagrams/b2var}
\InputIfFileExists{diagrams/b2var.tikz}{}{\input{./figures/diagrams/b2var.tikz}}
\endpgfgraphicnamed \hfill(B2)
\] 
The rules (S1) and (H) apply to spiders with an arbitrary number of input and output wires, including none,  so (S1) and (H) appear to be an infinite set of rules. Firstly, these rules do have algebraic counterparts as Frobenius algebras, which constitute a finite set.  Secondly, using the concept of \emph{bang-boxes} \cite{kissinger2016tensors}, even in their present form these rules can be notationally reduced  to a single rule, and the quantomatic-software accounts for rules in this form.  Allowing for bang-boxes, one can  also merge rules (B1) and (B2) into a single rule:
\[
\beginpgfgraphicnamed{diagrams/strongcomplementaryn}
\InputIfFileExists{diagrams/strongcomplementaryn.tikz}{}{\input{./figures/diagrams/strongcomplementaryn.tikz}}
\endpgfgraphicnamed 
\]
hence reducing the number of equations to be memorised to six.


\emph{Single-qubit Clifford+T ZX-calculus} is the restriction of ZX-calculus to spiders with exactly one input and one output, and $\alpha\in\{{n\pi\over 4}\mid n\in\mathbb{N}\}$.  As shown in \cite{Backens2}, the  rules (S1), (S2), (H1) and (H2) together with the rule:
\[
\beginpgfgraphicnamed{diagrams/k2}
\InputIfFileExists{diagrams/k2.tikz}{}{\input{./figures/diagrams/k2.tikz}}
\endpgfgraphicnamed\hfill(N)
\]
make ZX-calculus complete for this fragment of quantum theory.  We will  also use the following special form of the (N) rule, to which we again refer as (N):  
\[ 
\beginpgfgraphicnamed{diagrams/nvar}
\InputIfFileExists{diagrams/nvar.tikz}{}{\input{./figures/diagrams/nvar.tikz}}
\endpgfgraphicnamed \hfill(N)
\]

As single qubit circuits can be seen as a restriction of 2-qubit circuits, simply by letting the 2nd qubit unaltered, our result can also be seen as a completeness result for single-qubit Clifford+T ZX-calculus.   However, it is weaker than Backens' as we employ more rules.  


\section{Result: ZX rules vs.~circuit equations.}
 
 Recall that in this paper  the ZX-rules  hold up to a non-zero scalar.
 
\begin{theorem}\label{eq:mainthm}
The  rules (S1), (S2), 
(B1), (B2), (H1), (H2), (N) and (P) depicted below make ZX-calculus complete for 2-qubit Clifford+T circuits:
\[ 
  \begin{tabular}{|ccccc|}
  \hline
\beginpgfgraphicnamed{diagrams/spider-bis}
\InputIfFileExists{diagrams/spider-bis.tikz}{}{\input{./figures/diagrams/spider-bis.tikz}}
\endpgfgraphicnamed&\quad(S1)&$\qquad$&%
\beginpgfgraphicnamed{diagrams/greenredidentity}
\InputIfFileExists{diagrams/greenredidentity.tikz}{}{\input{./figures/diagrams/greenredidentity.tikz}}
\endpgfgraphicnamed&\quad(S2)\\
  &&\ \ && \\
\beginpgfgraphicnamed{diagrams/b1}
\InputIfFileExists{diagrams/b1.tikz}{}{\input{./figures/diagrams/b1.tikz}}
\endpgfgraphicnamed&\quad(B1)&&%
\beginpgfgraphicnamed{diagrams/b2s}
\InputIfFileExists{diagrams/b2s.tikz}{}{\input{./figures/diagrams/b2s.tikz}}
\endpgfgraphicnamed&\quad(B2)\\
  &&\ \ && \\
\beginpgfgraphicnamed{diagrams/hadamdecom}
\InputIfFileExists{diagrams/hadamdecom.tikz}{}{\input{./figures/diagrams/hadamdecom.tikz}}
\endpgfgraphicnamed&\quad(H1)&&%
\beginpgfgraphicnamed{diagrams/clchge}
\InputIfFileExists{diagrams/clchge.tikz}{}{\input{./figures/diagrams/clchge.tikz}}
\endpgfgraphicnamed&\quad(H2)\\
  &&\ \ && \\
 %
\beginpgfgraphicnamed{diagrams/k2}
\InputIfFileExists{diagrams/k2.tikz}{}{\input{./figures/diagrams/k2.tikz}}
\endpgfgraphicnamed&\quad(N)&& %
\beginpgfgraphicnamed{diagrams/zxztoxzx2}
\InputIfFileExists{diagrams/zxztoxzx2.tikz}{}{\input{./figures/diagrams/zxztoxzx2.tikz}}
\endpgfgraphicnamed&\quad( P)\vspace{0.5mm}
\\
  \hline
\end{tabular}
  \]
 where  $\alpha_2=\gamma_2$ if $\alpha_1=\gamma_1$, and  $\alpha_2=\pi+\gamma_2$ if $\alpha_1=-\gamma_1$; the equality (*) should be read as follows: for every diagram in LHS there exists $\alpha_2, \beta_2$ and $\gamma_2$ such that LHS=RHS (and vice versa if conjugating by the Hadamard gate).  In what follows we will see that we actually don't need to know the precise values of $\alpha_2, \beta_2$ and $\gamma_2$. 
 \end{theorem} 
 
So as compared to the rules that we saw in the previous section there is only one additional rule here, the (P) rule.  This rule is a new rule that was not present as such in any previous presentation of the ZX-calculus.  Of course, as the rules presented in \cite{ng2017universal} yield universal completeness,  one should be able to derive it from these:

\begin{lemma}\label{zxztoxzxcr}
For $\alpha_1, \beta_1, \gamma_1 \in (0, ~2\pi)$ we have:
\begin{equation}\label{zxztoxzxcreq}
\beginpgfgraphicnamed{diagrams//zxztoxzx}
\InputIfFileExists{diagrams//zxztoxzx.tikz}{}{\input{./figures/diagrams//zxztoxzx.tikz}}
\endpgfgraphicnamed\qquad\mbox{with}\quad
\left\{
\begin{array}{l}
\alpha_2=\arg z+\arg z'\\
\beta_2=2\arg (|\frac{z}{z'}|+i)\\
\gamma_2=\arg z-\arg z'
\end{array}
\right.
\end{equation}
where:
\[
\begin{array}{l}
z=\cos\frac{\beta_1}{2}\cos\frac{\alpha_1+\gamma_1}{2}+i\sin\frac{\beta_1}{2}\cos\frac{\alpha_1-\gamma_1}{2}
\qquad
z'=\cos\frac{\beta_1}{2}\sin\frac{\alpha_1+\gamma_1}{2}-i\sin\frac{\beta_1}{2}\sin\frac{\alpha_1-\gamma_1}{2}
\end{array}
\]
So if $\alpha_1=\gamma_1$, then $\alpha_2=\gamma_2$, and if $\alpha_1=-\gamma_1$, 
 then $\alpha_2=\pi+\gamma_2$. 
\end{lemma}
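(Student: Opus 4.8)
The plan is to prove the equation semantically and then invoke the completeness of the rule set of \cite{ng2017universal} to obtain derivability. Since the ZX-rules are sound and \cite{ng2017universal} is complete, it suffices to verify that the two diagrams denote the same linear map (up to a nonzero scalar) under the stated angle relations; the diagrammatic derivation of (P) then exists automatically. Reading the diagram top-to-bottom, the left-hand side denotes $R_Z(\alpha_1)R_X(\beta_1)R_Z(\gamma_1)$ and the right-hand side $R_X(\alpha_2)R_Z(\beta_2)R_X(\gamma_2)$, each up to a global phase, where $R_Z(\theta)=\mathrm{diag}(1,e^{i\theta})$ and $R_X(\theta)=H R_Z(\theta) H$. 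Both sides lie in $SU(2)$ up to scalar, so the statement is an instance of the Euler-angle decomposition: every $Z$-$X$-$Z$ product can be rewritten as an $X$-$Z$-$X$ product, and the lemma merely pins down the output angles.

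First I would expand both sides in the Pauli basis, writing any such product as $a_0 I - i(a_1 X + a_2 Y + a_3 Z)$ with $a_0,\dots,a_3\in\mathbb{R}$ and $\sum_j a_j^2=1$. A direct multiplication of the three rotations on the left gives
\[
a_0=\cos\tfrac{\beta_1}{2}\cos\tfrac{\alpha_1+\gamma_1}{2},\quad a_1=\sin\tfrac{\beta_1}{2}\cos\tfrac{\alpha_1-\gamma_1}{2},\quad a_2=\sin\tfrac{\beta_1}{2}\sin\tfrac{\alpha_1-\gamma_1}{2},\quad a_3=\cos\tfrac{\beta_1}{2}\sin\tfrac{\alpha_1+\gamma_1}{2}.
\]
The key observation is that the two quantities in the lemma are exactly the complex pairings of these coefficients, namely $z=a_0+ia_1$ and $z'=a_3-ia_2$, so matching them against the defining formulas is immediate. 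The same computation for the $X$-$Z$-$X$ side (obtained from the $Z$-$X$-$Z$ expansion by the Hadamard symmetry $X\leftrightarrow Z,\ Y\mapsto -Y$) yields $b_0+ib_1=\cos\tfrac{\beta_2}{2}e^{i(\alpha_2+\gamma_2)/2}$ and $b_3-ib_2=\sin\tfrac{\beta_2}{2}e^{i(\alpha_2-\gamma_2)/2}$.

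Equating the two sides coefficient-wise reduces the problem to solving $z=\cos\tfrac{\beta_2}{2}e^{i(\alpha_2+\gamma_2)/2}$ and $z'=\sin\tfrac{\beta_2}{2}e^{i(\alpha_2-\gamma_2)/2}$. Taking arguments gives $\alpha_2+\gamma_2=2\arg z$ and $\alpha_2-\gamma_2=2\arg z'$, hence $\alpha_2=\arg z+\arg z'$ and $\gamma_2=\arg z-\arg z'$; taking moduli gives $\tan\tfrac{\beta_2}{2}=|z'|/|z|$, which rearranges to $\beta_2=2\arg(|z/z'|+i)$. The special cases then drop out: when $\alpha_1=\gamma_1$ the coefficient $z'$ is real, so $\arg z'\in\{0,\pi\}$ and $\alpha_2=\gamma_2$, while when $\alpha_1=-\gamma_1$ the coefficient $z'$ is purely imaginary, so $\arg z'=\pm\tfrac{\pi}{2}$ and $\alpha_2=\gamma_2+\pi$.

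I expect the main obstacle to be not the algebra but the bookkeeping of branches and degeneracies: one must check that the hypothesis $\alpha_1,\beta_1,\gamma_1\in(0,2\pi)$ guarantees $z,z'\neq 0$ so that the arguments and the ratio $|z/z'|$ are well defined, and one must fix the branch of $\beta_2$ (say $\beta_2\in[0,\pi]$) so that $\cos\tfrac{\beta_2}{2},\sin\tfrac{\beta_2}{2}\ge 0$ agree with the moduli $|z|,|z'|$. Once these conventions are consistent, soundness together with the completeness of \cite{ng2017universal} delivers the diagrammatic derivation, and the ``vice versa'' direction follows by conjugating the whole equation by a Hadamard, which interchanges the $Z$-$X$-$Z$ and $X$-$Z$-$X$ forms.
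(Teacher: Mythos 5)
Your proposal is correct, and the computation checks out: for the $Z$-$X$-$Z$ product written as $a_0I-i(a_1X+a_2Y+a_3Z)$ your coefficients are exactly those of the standard Euler decomposition, the pairings $z=a_0+ia_1$, $z'=a_3-ia_2$ reproduce the lemma's $z,z'$, and the Hadamard-conjugated $X$-$Z$-$X$ side does give $z=\cos\tfrac{\beta_2}{2}e^{i(\alpha_2+\gamma_2)/2}$, $z'=\sin\tfrac{\beta_2}{2}e^{i(\alpha_2-\gamma_2)/2}$, from which the angle formulas and the two special cases follow. However, your route is genuinely different from the paper's. The paper does not use the $SU(2)$ quaternion structure at all: it first proves a \emph{generalised} colour-swap law (Lemma~\ref{phaseclcg} in the appendix) in which the phase parameters are arbitrary complex numbers $a_1,b_1,c_1$, by writing out the four matrix entries on each side, imposing proportionality by an unknown scalar $k$, extracting $a_2,b_2,c_2$ from linear combinations of entries, and solving a quadratic equation for $k$; the present lemma is then obtained as a corollary by substituting $a_1=e^{i\alpha_1}$ etc.\ and simplifying $U,V,S,\tau$ into $z$ and $z'$. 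Your argument is shorter and more transparent, but it only covers the unitary case, whereas the paper's detour buys the ``more general analytic solution for arbitrary generalised phases'' that it explicitly advertises. Both proofs are semantic verifications relying on soundness plus the completeness of \cite{ng2017universal} for derivability of (P), so that part of your framing matches the paper. One caveat applies to both: the hypothesis $\alpha_1,\beta_1,\gamma_1\in(0,2\pi)$ does \emph{not} rule out $z=0$ or $z'=0$ (e.g.\ $\beta_1=\pi$, $\alpha_1-\gamma_1=\pi$ gives $z=0$), so $\arg z$, $\arg z'$ and $|z/z'|$ can be undefined; you correctly flag this branch/degeneracy issue, and it is a defect of the lemma as stated rather than of your argument, since the paper's own formulas (division by $S$ and $T$, choice of square-root branch) suffer the same degeneracies without comment.
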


 This Lemma is restated as Corollary \ref{zxztoxzxcr} and proved in the appendix, which has a more general analytic solution for this `colour-swapping' property for arbitrary generalised phases.   
The idea for the need for a rule of this kind was first suggested by Schr\"oder de Witt and Zamdzhiev \cite{VladComp}.  As already indicated in the introduction, it is also clear that this rule takes one out of the Clifford+T realm  in the sense that the values of the angles in the RHS of (\ref{zxztoxzxcreq}) usually go beyond Clifford+T even if the LHS is inside of the realm.

The  proof of Theorem \ref{eq:mainthm} draws from Selinger and Bian's \cite{ptbian} set of circuit equations that  is complete for 2-qubit circuits.  Here we rely on universality of ZX-language to write down these circuits, and in particular  besides CNOT-gates these also involve symmetric CZ-gates:
\[
\beginpgfgraphicnamed{diagrams//CZ}
\InputIfFileExists{diagrams//CZ.tikz}{}{\input{./figures/diagrams//CZ.tikz}}
\endpgfgraphicnamed  
\]
In the statement of the following theorem we  adopt the more usual left-to-right reading of circuits, although we still express it as ZX diagrams.
 
\begin{theorem}\label{ptbianthm}\cite{ptbian} 
The following equations are complete for 2-qubit Clifford+T circuits: 
\begin{equation}\label{cmrns-2}
\beginpgfgraphicnamed{diagrams//completerelationlist-2}
\InputIfFileExists{diagrams//completerelationlist-2.tikz}{}{\input{./figures/diagrams//completerelationlist-2.tikz}}
\endpgfgraphicnamed
\end{equation}\vspace{-1.5mm}
\begin{equation}\label{cmrns-1}
\beginpgfgraphicnamed{diagrams//completerelationlist-1}
\InputIfFileExists{diagrams//completerelationlist-1.tikz}{}{\input{./figures/diagrams//completerelationlist-1.tikz}}
\endpgfgraphicnamed
\end{equation}\vspace{-0.5mm}
\begin{equation}\label{cmrns0}
\beginpgfgraphicnamed{diagrams//completerelationlist0}
\InputIfFileExists{diagrams//completerelationlist0.tikz}{}{\input{./figures/diagrams//completerelationlist0.tikz}}
\endpgfgraphicnamed
\end{equation}\vspace{0.0mm}
\begin{equation}\label{cmrns1}
\beginpgfgraphicnamed{diagrams//completerelationlist1}
\InputIfFileExists{diagrams//completerelationlist1.tikz}{}{\input{./figures/diagrams//completerelationlist1.tikz}}
\endpgfgraphicnamed
\end{equation}\vspace{1.5mm}
\begin{equation}\label{cmrns2}
\beginpgfgraphicnamed{diagrams//completerelationlist2}
\InputIfFileExists{diagrams//completerelationlist2.tikz}{}{\input{./figures/diagrams//completerelationlist2.tikz}}
\endpgfgraphicnamed
\end{equation}\vspace{1.5mm}
\begin{equation}\label{cmrns3}
\beginpgfgraphicnamed{diagrams//completerelationlist3}
\InputIfFileExists{diagrams//completerelationlist3.tikz}{}{\input{./figures/diagrams//completerelationlist3.tikz}}
\endpgfgraphicnamed
\end{equation}\vspace{1.5mm}
\begin{equation}\label{cmrns4}
\beginpgfgraphicnamed{diagrams//completerelationlist4}
\InputIfFileExists{diagrams//completerelationlist4.tikz}{}{\input{./figures/diagrams//completerelationlist4.tikz}}
\endpgfgraphicnamed
\end{equation}\vspace{1.5mm}
\begin{equation}\label{cmrns5}
\beginpgfgraphicnamed{diagrams//completerelationlist5}
\InputIfFileExists{diagrams//completerelationlist5.tikz}{}{\input{./figures/diagrams//completerelationlist5.tikz}}
\endpgfgraphicnamed
\end{equation}\vspace{1.5mm}
\begin{equation}\label{cmrns6}
\beginpgfgraphicnamed{diagrams//completerelationlist6}
\InputIfFileExists{diagrams//completerelationlist6.tikz}{}{\input{./figures/diagrams//completerelationlist6.tikz}}
\endpgfgraphicnamed
\end{equation}\vspace{1.5mm}
\begin{equation}\label{cmrns7}
\beginpgfgraphicnamed{diagrams//completerelationlist7}
\InputIfFileExists{diagrams//completerelationlist7.tikz}{}{\input{./figures/diagrams//completerelationlist7.tikz}}
\endpgfgraphicnamed
\end{equation}\vspace{1.5mm}
\begin{equation}\label{cmrns8}
\beginpgfgraphicnamed{diagrams//completerelationlist8}
\InputIfFileExists{diagrams//completerelationlist8.tikz}{}{\input{./figures/diagrams//completerelationlist8.tikz}}
\endpgfgraphicnamed
\end{equation}\vspace{1.5mm}
\begin{equation}\label{cmrns9}
\beginpgfgraphicnamed{diagrams//completerelationlist9}
\InputIfFileExists{diagrams//completerelationlist9.tikz}{}{\input{./figures/diagrams//completerelationlist9.tikz}}
\endpgfgraphicnamed
\end{equation}\vspace{1.5mm}
\begin{equation}\label{cmrns10}
\beginpgfgraphicnamed{diagrams//completerelationlist10}
\InputIfFileExists{diagrams//completerelationlist10.tikz}{}{\input{./figures/diagrams//completerelationlist10.tikz}}
\endpgfgraphicnamed
\end{equation}\vspace{1.5mm}
\begin{equation}\label{cmrns11}
\beginpgfgraphicnamed{diagrams//completerelationlist11}
\InputIfFileExists{diagrams//completerelationlist11.tikz}{}{\input{./figures/diagrams//completerelationlist11.tikz}}
\endpgfgraphicnamed 
\end{equation}\vspace{1.5mm}
\begin{equation}\label{cmrns12}
\left(%
\beginpgfgraphicnamed{diagrams//pbct1}
\InputIfFileExists{diagrams//pbct1.tikz}{}{\input{./figures/diagrams//pbct1.tikz}}
\endpgfgraphicnamed\right)^2=%
\beginpgfgraphicnamed{diagrams//completerelationlist12}
\begin{tikzpicture}
	\begin{pgfonlayer}{nodelayer}
		\node [style=none] (0) at (-0.25, 0.25) {};
		\node [style=none] (1) at (0.5, 0.25) {};
		\node [style=none] (2) at (-0.25, -0.25) {};
		\node [style=none] (3) at (0.5, -0.25) {};
	\end{pgfonlayer}
	\begin{pgfonlayer}{edgelayer}
		\draw (0.center) to (1.center);
		\draw (2.center) to (3.center);
	\end{pgfonlayer}
\end{tikzpicture}}
\endpgfgraphicnamed
\end{equation}\vspace{1.5mm}
\begin{equation}\label{cmrns13}
\left(%
\beginpgfgraphicnamed{diagrams//pbctb}
\InputIfFileExists{diagrams//pbctb.tikz}{}{\input{./figures/diagrams//pbctb.tikz}}
\endpgfgraphicnamed\right)^2=%
\beginpgfgraphicnamed{diagrams//completerelationlist12}
}
\endpgfgraphicnamed
\end{equation}\vspace{-1.5mm}
\begin{eqnarray*}
&&%
\beginpgfgraphicnamed{diagrams//completerelationlist141}
\InputIfFileExists{diagrams//completerelationlist141.tikz}{}{\input{./figures/diagrams//completerelationlist141.tikz}}
\endpgfgraphicnamed\vspace{1.5mm}\\
&&%
\beginpgfgraphicnamed{diagrams//completerelationlist142}
\InputIfFileExists{diagrams//completerelationlist142.tikz}{}{\input{./figures/diagrams//completerelationlist142.tikz}}
\endpgfgraphicnamed\vspace{1.5mm}\\ 
&&\ =%
\beginpgfgraphicnamed{diagrams//completerelationlist12}
}
\endpgfgraphicnamed
\end{eqnarray*}
\vspace{-13mm}
\begin{equation}\label{cmrns14}
\end{equation}
\end{theorem}
 
 Not only does this Theorem serve as a stepping stone, it  is also the 
 main point of comparison of our result. The ZX-rules are clearly much simpler than the circuit equations, which, to say the least, are virtually impossible to memorise, let alone apply.
 
 \section{Proof.}

We need to show that the  equations  in Theorem \ref{ptbianthm} can be derived from those in Theorem \ref{eq:mainthm}.  Doing so is a straightforward calculation for the first 14 ones.  However, this is not the case for the remaining  circuit relations (\ref{cmrns12}), (\ref{cmrns13}) and (\ref {cmrns14})  each of which we prove as a lemma. 


\begin{lemma}
Let $A=$
\[
\beginpgfgraphicnamed{diagrams//pbct1}
\InputIfFileExists{diagrams//pbct1.tikz}{}{\input{./figures/diagrams//pbct1.tikz}}
\endpgfgraphicnamed
\]
 then  $A^2=I.$
\end{lemma}

\begin{proof}
First we have $A=$
\begin{equation}
%
\beginpgfgraphicnamed{diagrams//pbct22new}
\InputIfFileExists{diagrams//pbct22new.tikz}{}{\input{./figures/diagrams//pbct22new.tikz}}
\endpgfgraphicnamed
\end{equation}
By the rule (P), we can assume that
\begin{equation}\label{pbct32}
\beginpgfgraphicnamed{diagrams//pbct32new}
\InputIfFileExists{diagrams//pbct32new.tikz}{}{\input{./figures/diagrams//pbct32new.tikz}}
\endpgfgraphicnamed
\end{equation}
Since $e^{i\frac{-\pi}{4}} e^{i\frac{\pi}{4}}=1 $, we could let $\gamma=\alpha+\pi$.
Also note that 
\begin{equation}
\beginpgfgraphicnamed{diagrams//pbct421new}
\InputIfFileExists{diagrams//pbct421new.tikz}{}{\input{./figures/diagrams//pbct421new.tikz}}
\endpgfgraphicnamed=\left(%
\beginpgfgraphicnamed{diagrams//pbct422new}
\InputIfFileExists{diagrams//pbct422new.tikz}{}{\input{./figures/diagrams//pbct422new.tikz}}
\endpgfgraphicnamed\right)^{-1}
\end{equation}
Thus:
\begin{equation}\label{pbct52}
\beginpgfgraphicnamed{diagrams//pbct52new}
\InputIfFileExists{diagrams//pbct52new.tikz}{}{\input{./figures/diagrams//pbct52new.tikz}}
\endpgfgraphicnamed
\end{equation}
Therefore, $A=$
$$
\beginpgfgraphicnamed{diagrams//pbct62}
\InputIfFileExists{diagrams//pbct62.tikz}{}{\input{./figures/diagrams//pbct62.tikz}}
\endpgfgraphicnamed
$$
Finally,
$A^2=$
$$
\beginpgfgraphicnamed{diagrams//pbct722}
\InputIfFileExists{diagrams//pbct722.tikz}{}{\input{./figures/diagrams//pbct722.tikz}}
\endpgfgraphicnamed
$$
\end{proof}

\begin{lemma}
Let $B=$
\[
\beginpgfgraphicnamed{diagrams//pbctb}
\InputIfFileExists{diagrams//pbctb.tikz}{}{\input{./figures/diagrams//pbctb.tikz}}
\endpgfgraphicnamed
\]
 then   $B^2=I$.
\end{lemma}

\begin{proof}
Firstly we have:
\[
%
\beginpgfgraphicnamed{diagrams//pbctb22new}
\InputIfFileExists{diagrams//pbctb22new.tikz}{}{\input{./figures/diagrams//pbctb22new.tikz}}
\endpgfgraphicnamed
\]
By the rule (P), we can assume that:
\begin{equation}\label{pbctb32}
\beginpgfgraphicnamed{diagrams//pbctb32new}
\InputIfFileExists{diagrams//pbctb32new.tikz}{}{\input{./figures/diagrams//pbctb32new.tikz}}
\endpgfgraphicnamed
\end{equation}
Since $e^{i\frac{-\pi}{4}} e^{i\frac{\pi}{4}}=1 $, we could let $\gamma=\alpha+\pi$.
Also note that: 
\[
\beginpgfgraphicnamed{diagrams//pbctb421new}
\InputIfFileExists{diagrams//pbctb421new.tikz}{}{\input{./figures/diagrams//pbctb421new.tikz}}
\endpgfgraphicnamed=\left(%
\beginpgfgraphicnamed{diagrams//pbctb422new}
\InputIfFileExists{diagrams//pbctb422new.tikz}{}{\input{./figures/diagrams//pbctb422new.tikz}}
\endpgfgraphicnamed\right)^{-1}
\]
Thus:
\begin{equation}\label{pbctb52}
\beginpgfgraphicnamed{diagrams//pbctb52new}
\InputIfFileExists{diagrams//pbctb52new.tikz}{}{\input{./figures/diagrams//pbctb52new.tikz}}
\endpgfgraphicnamed
\end{equation}
 Using again the same technique as earlier we obtain:
$$
\beginpgfgraphicnamed{diagrams//pbctb62-reduced}
\InputIfFileExists{diagrams//pbctb62-reduced.tikz}{}{\input{./figures/diagrams//pbctb62-reduced.tikz}}
\endpgfgraphicnamed
$$
Finally,  again following the previous lemma, $B^2=$
\[
\beginpgfgraphicnamed{diagrams//pbctb72new-reduced}
\InputIfFileExists{diagrams//pbctb72new-reduced.tikz}{}{\input{./figures/diagrams//pbctb72new-reduced.tikz}}
\endpgfgraphicnamed
\]
\end{proof}

\begin{lemma}
Let $C=$
\[
\beginpgfgraphicnamed{diagrams//pbctc1}
\InputIfFileExists{diagrams//pbctc1.tikz}{}{\input{./figures/diagrams//pbctc1.tikz}}
\endpgfgraphicnamed
\]
and $D=$
\[
\beginpgfgraphicnamed{diagrams//pbctc2}
\InputIfFileExists{diagrams//pbctc2.tikz}{}{\input{./figures/diagrams//pbctc2.tikz}}
\endpgfgraphicnamed
\]
 then   $D\circ C=I$.
\end{lemma}

\begin{proof}
Firstly we simplify the circuit $C$ as follows:
\[
\beginpgfgraphicnamed{diagrams//pbctc3new}
\InputIfFileExists{diagrams//pbctc3new.tikz}{}{\input{./figures/diagrams//pbctc3new.tikz}}
\endpgfgraphicnamed
\]
By the rule (P), we can assume that: 
\begin{equation}\label{circuitceq}
\beginpgfgraphicnamed{diagrams//pbctc32new}
\InputIfFileExists{diagrams//pbctc32new.tikz}{}{\input{./figures/diagrams//pbctc32new.tikz}}
\endpgfgraphicnamed
\end{equation}
Then we have  for $C$: 
\begin{equation}\label{circuitceq4}
\beginpgfgraphicnamed{diagrams//pbctc34new}
\InputIfFileExists{diagrams//pbctc34new.tikz}{}{\input{./figures/diagrams//pbctc34new.tikz}}
\endpgfgraphicnamed
\end{equation}
Secondly, we simplify the circuit $D$ as follows:
\begin{equation*}
\beginpgfgraphicnamed{diagrams//pbctc4new}
\InputIfFileExists{diagrams//pbctc4new.tikz}{}{\input{./figures/diagrams//pbctc4new.tikz}}
\endpgfgraphicnamed
\end{equation*}
By the rule (P), we have
\begin{equation}\label{circuitceq2}
\beginpgfgraphicnamed{diagrams//pbctc33new}
\InputIfFileExists{diagrams//pbctc33new.tikz}{}{\input{./figures/diagrams//pbctc33new.tikz}}
\endpgfgraphicnamed
\end{equation}
Therefore  we have for $D$: 
\begin{equation}\label{circuitceq5}
\beginpgfgraphicnamed{diagrams//pbctc35new}
\InputIfFileExists{diagrams//pbctc35new.tikz}{}{\input{./figures/diagrams//pbctc35new.tikz}}
\endpgfgraphicnamed
\end{equation}
Then we obtain the composition  for $D\circ C=$
\begin{equation}\label{circuitceq6}
\beginpgfgraphicnamed{diagrams//pbctc36new}
\InputIfFileExists{diagrams//pbctc36new.tikz}{}{\input{./figures/diagrams//pbctc36new.tikz}}
\endpgfgraphicnamed
\end{equation}
By the rule (P),
 we can assume that:
\begin{equation}\label{circuitceq7}
\beginpgfgraphicnamed{diagrams//pbctc37new}
\InputIfFileExists{diagrams//pbctc37new.tikz}{}{\input{./figures/diagrams//pbctc37new.tikz}}
\endpgfgraphicnamed
\end{equation}
Then for its inverse, we have 
\begin{equation}\label{circuitceq8}
\beginpgfgraphicnamed{diagrams//pbctc38new}
\InputIfFileExists{diagrams//pbctc38new.tikz}{}{\input{./figures/diagrams//pbctc38new.tikz}}
\endpgfgraphicnamed
\end{equation}
Also we can obtain that:
\begin{equation}\label{circuitceq9}
\beginpgfgraphicnamed{diagrams//pbctc39new}
\InputIfFileExists{diagrams//pbctc39new.tikz}{}{\input{./figures/diagrams//pbctc39new.tikz}}
\endpgfgraphicnamed
\end{equation}
As a consequence, we have the inverse for both sides of (\ref{circuitceq9}):
\begin{equation}\label{circuitceq10}
\beginpgfgraphicnamed{diagrams//pbctc310new}
\InputIfFileExists{diagrams//pbctc310new.tikz}{}{\input{./figures/diagrams//pbctc310new.tikz}}
\endpgfgraphicnamed
\end{equation}
Now we can rewrite  $D\circ C$ as: 
\begin{equation}\label{circuitceq11}
\beginpgfgraphicnamed{diagrams//pbctc311new}
\InputIfFileExists{diagrams//pbctc311new.tikz}{}{\input{./figures/diagrams//pbctc311new.tikz}}
\endpgfgraphicnamed
\end{equation}
We can depict the dashed part of (\ref{circuitceq11}) in a form of connected octagons, and to deal with these octagons  we use (Hex):
\begin{equation}\label{octagon2checkeq}
\beginpgfgraphicnamed{diagrams//octagon2checknew-part1}
\InputIfFileExists{diagrams//octagon2checknew-part1.tikz}{}{\input{./figures/diagrams//octagon2checknew-part1.tikz}}
\endpgfgraphicnamed
\end{equation}
\[
\beginpgfgraphicnamed{diagrams//octagon2checknew-part2}
\InputIfFileExists{diagrams//octagon2checknew-part2.tikz}{}{\input{./figures/diagrams//octagon2checknew-part2.tikz}}
\endpgfgraphicnamed
\]
By the (P) rule, we have: 
\begin{equation}\label{octasim2eq}
\beginpgfgraphicnamed{diagrams//octasim2new}
\InputIfFileExists{diagrams//octasim2new.tikz}{}{\input{./figures/diagrams//octasim2new.tikz}}
\endpgfgraphicnamed
\end{equation}
where $z=x+\pi$.  Then we take inverse for each side of (\ref{octasim2eq}) and obtain that:
\begin{equation}\label{octasim3eq}
\beginpgfgraphicnamed{diagrams//octasim3new}
\InputIfFileExists{diagrams//octasim3new.tikz}{}{\input{./figures/diagrams//octasim3new.tikz}}
\endpgfgraphicnamed
\end{equation}
By rearranging the phases on both sides of (\ref{octasim2eq}), we have:
\begin{equation}\label{octasim4eq}
\beginpgfgraphicnamed{diagrams//octasim4new}
\InputIfFileExists{diagrams//octasim4new.tikz}{}{\input{./figures/diagrams//octasim4new.tikz}}
\endpgfgraphicnamed
\end{equation}
Thus:
\begin{equation}\label{octasim5eq}
\beginpgfgraphicnamed{diagrams//octasim5new}
\InputIfFileExists{diagrams//octasim5new.tikz}{}{\input{./figures/diagrams//octasim5new.tikz}}
\endpgfgraphicnamed
\end{equation}
Therefore:
\begin{equation}\label{octasim6eq}
\beginpgfgraphicnamed{diagrams//octasim6new}
\InputIfFileExists{diagrams//octasim6new.tikz}{}{\input{./figures/diagrams//octasim6new.tikz}}
\endpgfgraphicnamed
\end{equation}
It then follows that:
\begin{equation}\label{octasim7eq}
\beginpgfgraphicnamed{diagrams//octasim7new}
\InputIfFileExists{diagrams//octasim7new.tikz}{}{\input{./figures/diagrams//octasim7new.tikz}}
\endpgfgraphicnamed
\end{equation}
If we take the inverse of the left-hand-side of (\ref{octasim7eq}), then we have:
\begin{equation}\label{octasim8eq}
\beginpgfgraphicnamed{diagrams//octasim8new}
\InputIfFileExists{diagrams//octasim8new.tikz}{}{\input{./figures/diagrams//octasim8new.tikz}}
\endpgfgraphicnamed
\end{equation}
Now we can further simplify the final diagram in  (\ref{octagon2checkeq}) as follows:
\begin{equation}\label{octasim9eq}
\beginpgfgraphicnamed{diagrams//octasim9new}
\InputIfFileExists{diagrams//octasim9new.tikz}{}{\input{./figures/diagrams//octasim9new.tikz}}
\endpgfgraphicnamed
\end{equation}
Finally, the composite circuit $D\circ C$ as  can be simplified as follows: 
\begin{equation}\label{octasim10eq}
\beginpgfgraphicnamed{diagrams//octasim10new}
\InputIfFileExists{diagrams//octasim10new.tikz}{}{\input{./figures/diagrams//octasim10new.tikz}}
\endpgfgraphicnamed
\end{equation}
where we used the following property:
\begin{equation}\label{octasim11eq}
\beginpgfgraphicnamed{diagrams//octasim11new}
\InputIfFileExists{diagrams//octasim11new.tikz}{}{\input{./figures/diagrams//octasim11new.tikz}}
\endpgfgraphicnamed
\end{equation}
\end{proof}

\section{Conclusion and further work}

We gave a set of ZX-rules that allows one to establish all equations between 2-qubit circuits, and these ZX-rules are remarkably simpler than the relations between unitary gates from which they were derived.  The key to this simplicity is: (i) abandoning unitarity at intermediate stages, and (ii) abandoning the T-restriction, which comes about when applying rule (P).   In the case of the latter, it is important to stress again that the actual values of the phases in the RHS of (P) don't have to be known.

Also, while the techniques used to establish the relations between two-qubit unitary gates don't scale to more than two qubits, the ZX-calculus, by being complete, already provides us with such a set.  It is just a matter to figure out if all of those rules are actually needed for the case of circuits.  Automation is moreover also possible thanks to the quantomatic software.  Although we don't  yet have a general strategy for simplifying quantum circuits by the ZX-calculus, it is possible at least in some cases.    In fact, in ongoing work in collaboration with Niel de Beaudrap, using similar techniques as some of the ones in this paper, we have shown that using ZX-calculus we can outperform the state-of-the-art for quantum circuit simplification.  A paper on this is forthcoming.

We expect the new rule (P) to have many more utilities within the domain of quantum computation and information.  The same question remains for other rules that emerged as part of the the completion of ZX-calculus.

A natural challenge of interest to the Reversible Computing community is whether the classical fragment of ZX-calculus can be used for deriving similar completeness results for classical circuits.

%
%
%
%
%

\section*{Acknowledgement}

This work was sponsored by Cambridge Quantum Computing Inc.~for which we are grateful. QW also thanks   Kang Feng Ng for useful discussions.  

\bibliographystyle{splncs03}
\bibliography{completeness2qubit}     

\newpage\appendix

\section{ Verification of the complete relations in the ZX-calculus}
Firstly, we explain how the ZX rule (P) is obtained.  By \emph{generalised phases} we mean:
\[
\beginpgfgraphicnamed{diagrams//greenbxa}
\begin{tikzpicture}
	\begin{pgfonlayer}{nodelayer}
		\node [style=none] (0) at (0, 0.25) {};
		\node [style=gbox] (1) at (0, 0) {$a$};
		\node [style=none] (2) at (0, -0.25) {};
	\end{pgfonlayer}
	\begin{pgfonlayer}{edgelayer}
		\draw (0.center) to (2.center);
	\end{pgfonlayer}
\end{tikzpicture}}
\endpgfgraphicnamed:= 
 \begin{pmatrix}
        1 & 0 \\
        0 & a \end{pmatrix}
\qquad \qquad        
\beginpgfgraphicnamed{diagrams//redbxb}
\InputIfFileExists{diagrams//redbxb.tikz}{}{\input{./figures/diagrams//redbxb.tikz}}
\endpgfgraphicnamed:= \begin{pmatrix}
        1+a & 1-a \\
        1-a & 1+a \end{pmatrix}.
\]
where $a$ is  an arbitrary complex number.

\begin{lemma}\label{phaseclcg}(Generalised phases colour-swap  law)
We have:
\begin{equation}\label{colorspps}
\beginpgfgraphicnamed{diagrams//gpcswp}
\InputIfFileExists{diagrams//gpcswp.tikz}{}{\input{./figures/diagrams//gpcswp.tikz}}
\endpgfgraphicnamed
\end{equation}
where:
$$
a_2=-i(U+V)\sqrt{\frac{S}{T}},\qquad
b_2=\frac{\tau+i\sqrt{\frac{T}{S}}}{\tau-i\sqrt{\frac{T}{S}}}
,\qquad c_2=-i(U-V)\sqrt{\frac{S}{T}}.
 $$
 with:
\begin{equation}\label{taouvst}
\begin{array}{ll}
\tau=  (1-b_1)(a_1+c_1)+(1+b_1)(1+a_1c_1),&\\
U=(1+b_1)(a_1c_1-1), \\
V=( 1-b_1)(a_1-c_1),\\
S=( 1-b_1)(a_1+c_1)-(1+b_1)(1+a_1c_1), \\
T=\tau(U^2-V^2).
\end{array}
\end{equation}
Especially, if $a_1=c_1,$ then  $a_2=c_2$;  if $a_1c_1=1$, then  $a_2=-c_2$. 
\end{lemma}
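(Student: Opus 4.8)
The plan is to prove the identity by an explicit $2\times2$ matrix computation, organised around the observation that the red generalised phase is the Hadamard conjugate of the green one. Write $G(a):=\mathrm{diag}(1,a)$ for the green box and $R(a)$ for the red box, and let $H=\tfrac1{\sqrt2}\begin{pmatrix}1&1\\1&-1\end{pmatrix}$. A one-line check gives $R(a)=2\,H\,G(a)\,H$ and dually $H\,G(a)\,H=\tfrac12 R(a)$, so conjugating the right-hand red-green-red product by $H$ returns a green-red-green product, $H\big[R(a_2)G(b_2)R(c_2)\big]H = 2\,G(a_2)R(b_2)G(c_2)$. Since we work up to a nonzero scalar, the asserted colour-swap law
\[
G(a_1)\,R(b_1)\,G(c_1)\;\simeq\;R(a_2)\,G(b_2)\,R(c_2)
\]
is therefore equivalent to the single statement $H\big[G(a_1)R(b_1)G(c_1)\big]H \simeq G(a_2)R(b_2)G(c_2)$: the whole problem reduces to re-expressing the Hadamard conjugate of the left-hand matrix again in green-red-green form.

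First I would multiply out the left-hand matrix,
\[
M := G(a_1)R(b_1)G(c_1) = \begin{pmatrix} 1+b_1 & (1-b_1)c_1 \\ a_1(1-b_1) & a_1(1+b_1)c_1 \end{pmatrix},
\]
and then compute the four entries $p_{ij}$ of $HMH$. The key, and the reason the quantities $\tau,U,V,S$ of the statement appear at all, is that these entries are exactly (one half of) those quantities: the diagonal comes out as $p_{11}=\tfrac12\tau$ and $p_{22}=-\tfrac12 S$, and the anti-diagonal as $p_{12}=-\tfrac12(U-V)$ and $p_{21}=-\tfrac12(U+V)$, where $\tau,U,V,S$ are read off by collecting the coefficients of $(1+b_1)$ and $(1-b_1)$.

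Next I would invert the green-red-green parametrisation. Up to a common scalar $\lambda$, a matrix $G(a)R(b)G(c)$ has entries $\lambda(1+b)$, $\lambda(1-b)c$, $\lambda a(1-b)$, $\lambda a(1+b)c$, so the product of its diagonal entries over the product of its anti-diagonal entries equals $\big(\tfrac{1+b}{1-b}\big)^2$, which fixes $b$, while $a$ and $c$ follow linearly from the two ratios $p_{21}/p_{11}$ and $p_{12}/p_{11}$. Substituting the entries of $HMH$ and using $T=\tau(U^2-V^2)$ gives $\big(\tfrac{1+b_2}{1-b_2}\big)^2 = -\tau^2 S/T$, and the branch $\tfrac{1+b_2}{1-b_2}=i\tau\sqrt{S/T}$ yields precisely $b_2=\frac{\tau+i\sqrt{T/S}}{\tau-i\sqrt{T/S}}$; the two ratios then force $a_2=-i(U+V)\sqrt{S/T}$ and $c_2=-i(U-V)\sqrt{S/T}$.

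The main obstacle is the bookkeeping of the square-root branches: solving up to a scalar determines only $a_2^2$, $c_2^2$ and $\big(\tfrac{1+b_2}{1-b_2}\big)^2$, so one must verify that a single scalar $\lambda$ serves all four entries simultaneously. The resolution is that once the branch of $b_2$ is fixed, $a_2$ and $c_2$ are forced \emph{linearly}, with no remaining sign freedom, and the fourth entry is then automatically consistent via the determinant-type identity $a_2c_2 = -S/\tau = p_{22}/p_{11}$, which follows from $T=\tau(U^2-V^2)$. Finally the two special cases are immediate from the formulas: $a_1=c_1$ makes $V=0$, whence $a_2=c_2$, and $a_1c_1=1$ makes $U=0$, whence $a_2=-c_2$.
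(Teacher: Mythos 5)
Your proposal is correct, and it reaches the stated formulas by a genuinely different route than the paper. The paper multiplies out both sides directly, posits $R(c_2)G(b_2)R(a_2)=k\,G(c_1)R(b_1)G(a_1)$ with an unknown scalar $k$, extracts $a_2,b_2,c_2$ as functions of $k$ from sums and differences of matrix entries, and then pins down $k$ by deriving and solving a quadratic equation $(S\tau^2+T)k^2-16S\tau k+64S=0$. You instead exploit the structural fact $R(a)=2\,H\,G(a)\,H$ to conjugate the whole equation by $H$, which converts the problem into expressing the single matrix $HMH$ in green--red--green normal form; the entries of $HMH$ are then exactly $\tfrac12\tau$, $-\tfrac12 S$, $-\tfrac12(U\pm V)$, which gives a conceptual explanation of where the paper's auxiliary quantities $\tau,U,V,S$ come from (in the paper they appear only as the linear combinations $X+Y+Z+W$, etc.). Your ratio-based inversion of the parametrisation ($\bigl(\tfrac{1+b}{1-b}\bigr)^2$ from the cross-ratio of entries, then $a_2,c_2$ linearly from $p_{21}/p_{11}$ and $p_{12}/p_{11}$) eliminates the scalar from the outset, so no quadratic ever appears, and your handling of the branch ambiguity is cleaner than the paper's (you fix one branch for $b_2$ and observe that $a_2$, $c_2$ and the fourth entry are then forced, the last via $a_2c_2=-S/\tau$; the paper simply picks a root of its quadratic and defers the sign discussion). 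The only cosmetic discrepancy is that you compute $G(a_1)R(b_1)G(c_1)$ where the paper computes its transpose $G(c_1)R(b_1)G(a_1)$; since swapping $a_1\leftrightarrow c_1$ fixes $\tau,U,S,T$ and negates $V$, hence swaps $a_2\leftrightarrow c_2$, this is only a choice of reading direction for the diagram and does not affect correctness.
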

\begin{proof}
The matrix of the left-hand-side of (\ref{colorspps}) is 
\[
\begin{pmatrix}
        1 & 0 \\
        0 & c_1
 \end{pmatrix}
 \begin{pmatrix}
       1+b_1 &  1-b_1 \\
        1-b_1 &  1+b_1
 \end{pmatrix}
 \begin{pmatrix}
        1 & 0 \\
        0 & a_1
 \end{pmatrix}
 =\begin{pmatrix}
         1+b_1 &   a_1( 1-b_1)\\
        c_1( 1-b_1) &  a_1 c_1( 1+b_1)
 \end{pmatrix}
 \]
 The matrix of the right-hand-hand-side of (\ref{colorspps}) is 
$$
 \begin{pmatrix}
       1+c_2 &  1-c_2 \\
        1-c_2 &  1+c_2
 \end{pmatrix}
 \begin{pmatrix}
        1 & 0 \\
        0 & b_2
 \end{pmatrix}
 \begin{pmatrix}
       1+a_2 &  1-a_2 \\
        1-a_2 &  1+a_2
 \end{pmatrix}
 $$
\[
=\begin{pmatrix}
      (1+c_2) (1+a_2) +(1-c_2) b_2(1-a_2)&  
       (1+c_2) (1-a_2) +(1-c_2) b_2(1+a_2) \\
         (1-c_2) (1+a_2) +(1+c_2) b_2(1-a_2) & 
           (1-c_2) (1-a_2) +(1+c_2) b_2(1+a_2)
 \end{pmatrix}
 :=\begin{pmatrix}
      X &  Y \\
        Z &  W
 \end{pmatrix}
\]
To let the equality (\ref{colorspps})  hold, there must exist a non-zero complex number $k$ such that
\begin{equation}\label{mtrixk}
\begin{pmatrix}
      X &  Y \\
        Z &  W
 \end{pmatrix}=
 k\begin{pmatrix}
         1+b_1 &   a_1( 1-b_1)\\
        c_1( 1-b_1) &  a_1 c_1( 1+b_1)
 \end{pmatrix}
\end{equation}
Then 
$$X+Y=2(1+b_2+c_2-b_2c_2)=k[1+b_1 +  a_1( 1-b_1)]$$
\[
Z+W=[(1-c_2)2+(1+c_2) b_22]=2(1+b_2-c_2+b_2c_2)=k[ c_1( 1-b_1)+a_1c_1(1+b_1) ]
\]
Thus
\[
X+Y+Z+W=4(1+b_2)=k[(1+b_1)(1+a_1c_1)+( 1-b_1)(a_1+c_1)], 
\]
i.e.,
\[
b_2=\frac{k}{4}[(1+b_1)(1+a_1c_1)+( 1-b_1)(a_1+c_1)]-1=\frac{k}{4}\tau -1, 
\]
and
\[
Z+W-(X+Y)=4c_2(b_2-1)=k[(1+b_1)(a_1c_1-1)+( 1-b_1)(c_1-a_1)], 
\]
i.e.,
\[
c_2=\frac{\frac{k}{4}[(1+b_1)(a_1c_1-1)+( 1-b_1)(c_1-a_1)]}{\frac{k}{4}\tau -2}=\frac{k[(1+b_1)(a_1c_1-1)+( 1-b_1)(c_1-a_1)]}{k\tau -8}.
\]
Similarly,
$$X+Z=2(1+a_2+b_2-a_2b_2)=k[1+b_1 +  c_1( 1-b_1)]$$
\[
Y+W=2(1+b_2-a_2+b_2a_2)=k[ a_1( 1-b_1)+a_1c_1(1+b_1) ]
\]

\[
Y+W-(X+Z)=4a_2(b_2-1)=k[(1+b_1)(a_1c_1-1)+( 1-b_1)(a_1-c_1)], 
\]
i.e.,
\[
a_2=\frac{\frac{k}{4}[(1+b_1)(a_1c_1-1)+( 1-b_1)(a_1-c_1)]}{\frac{k}{4}\tau -2}=\frac{k[(1+b_1)(a_1c_1-1)+( 1-b_1)(a_1-c_1)]}{k\tau -8},
\]
Now we decide the value of $k$.
Let $U=(1+b_1)(a_1c_1-1), ~~V=( 1-b_1)(a_1-c_1).$ Then
\[
a_2+c_2=\frac{2kU}{k\tau -8}, ~~a_2c_2=\frac{k^2(U^2-V^2)}{(k\tau -8)^2}, 
\]
Furthermore,
\[
X=(1+c_2) (1+a_2) +(1-c_2) b_2(1-a_2)=k(1+b_1),
\]
i.e.,
\[
1+a_2+b_2 +c_2 +a_2c_2-a_2b_2-b_2c_2+a_2b_2c_2=k(1+b_1),
\]
by rearrangement, we have
\[
(a_2 +c_2)(1 -b_2)+(1+b_2)(1+a_2c_2)=k(1+b_1).
\]
Therefore,
\[
\frac{2kU}{k\tau -8}(2 -\frac{k}{4}\tau)+\frac{k}{4}\tau(1+\frac{k^2(U^2-V^2)}{(k\tau -8)^2})=k(1+b_1).
\]
Divide by $k$ on both sides, then multiply by $(k\tau -8)^2$  on both sides,  we obtain a quadratic equation of $k$:
\[
2U(k\tau -8)(2 -\frac{k}{4}\tau)+\frac{1}{4}\tau[(k\tau -8)^2+k^2(U^2-V^2)]=(k\tau -8)^2(1+b_1).
\]
By rearrangement, we have
\[
(k\tau -8)^2[\tau-2U -4(1+b_1)]+k^2\tau(U^2-V^2)=0.
\]
Let 
\[
\begin{array}{ll}
S=\tau-2U -4(1+b_1)&=(1+b_1)(1+a_1c_1)+( 1-b_1)(a_1+c_1)-2[(1+b_1)(a_1c_1-1)+2(1+b_1)]\\
&=
( 1-b_1)(a_1+c_1)-(1+b_1)(1+a_1c_1),\\
T=\tau(U^2-V^2).&
\end{array}
\]
Then the equation can be rewritten as
\[
(S\tau^2+T)k^2-16S\tau k+64S=0.
\]
Solve this equation, we have
\[
k=\frac{8S\tau \pm 8\sqrt{-ST}}{S\tau^2+T}.
\]
When we calculate the square root, we will consider its sign, so here we  can just write $k$ as 
\[
k=\frac{8S\tau + 8\sqrt{-ST}}{S\tau^2+T}.
\]
Now 
\[
\frac{8}{k}=\frac{8}{\frac{8S\tau + 8\sqrt{-ST}}{S\tau^2+T}}=\frac{S\tau^2+T}{S\tau + \sqrt{-ST}}=\frac{(\sqrt{S}\tau+i\sqrt{T})(\sqrt{S}\tau-i\sqrt{T})}{\sqrt{S}(\sqrt{S}\tau+i\sqrt{T})}=\tau-i\sqrt{\frac{T}{S}},
\]
i.e.,
$$
k=\frac{8}{\tau-i\sqrt{\frac{T}{S}}}.
$$
Then
$$
a_2=\frac{k(U+V)}{k\tau-8}=\frac{U+V}{\tau-\frac{8}{k}}=\frac{U+V}{i\sqrt{\frac{T}{S}}}
=-i(U+V)\sqrt{\frac{S}{T}}.$$
$$
c_2=-i(U-V)\sqrt{\frac{S}{T}}, ~~b_2=\frac{\tau+i\sqrt{\frac{T}{S}}}{\tau-i\sqrt{\frac{T}{S}}}.
$$
\end{proof}

\begin{corollary}\label{zxztoxzxcr}
For regular phases we have: 
\begin{equation}
\beginpgfgraphicnamed{diagrams//zxztoxzx}
\InputIfFileExists{diagrams//zxztoxzx.tikz}{}{\input{./figures/diagrams//zxztoxzx.tikz}}
\endpgfgraphicnamed
\end{equation}
with $\alpha_1, \beta_1, \gamma_1 \in (0, ~2\pi), \alpha_2=\arg z+\arg z_1, \gamma_2=\arg z-\arg z_1,  \beta_2=2\arg (|\frac{z}{z_1}|+i)$,  where
 $z=\cos\frac{\beta_1}{2}\cos\frac{\alpha_1+\gamma_1}{2}+i\sin\frac{\beta_1}{2}\cos\frac{\alpha_1-\gamma_1}{2}$, $z_1=\cos\frac{\beta_1}{2}\sin\frac{\alpha_1+\gamma_1}{2}-i\sin\frac{\beta_1}{2}\sin\frac{\alpha_1-\gamma_1}{2}$, $z_2=|\frac{z}{z_1}|+i $. If $\alpha_1=\gamma_1$, then $\alpha_2=\gamma_2$;  If $\alpha_1=-\gamma_1$, then $\alpha_2=\pi+\gamma_2(Mod ~2\pi)$.
\end{corollary}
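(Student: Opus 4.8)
The plan is to derive Corollary \ref{zxztoxzxcr} as the specialisation of Lemma \ref{phaseclcg} to \emph{regular} phases. First I would observe that a regular green phase $\alpha$ is exactly the generalised green phase with $a=e^{i\alpha}$, since its interpretation matrix $\mathrm{diag}(1,e^{i\alpha})$ is the green-box of the appendix; likewise, conjugating by $H$ shows that a regular red phase $\beta$ is, up to the scalar $\tfrac12$, the generalised red phase with $b=e^{i\beta}$. Hence the ZXZ diagram on the left of the Corollary is precisely the left-hand side of (\ref{colorspps}) with $a_1=e^{i\alpha_1}$, $b_1=e^{i\beta_1}$, $c_1=e^{i\gamma_1}$, so the diagrammatic equation itself is immediate from Lemma \ref{phaseclcg}. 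What remains is to evaluate the resulting $a_2,b_2,c_2$ and check that they are again regular phases with the stated arguments.

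The heart of the argument is a half-angle factorisation of the auxiliary quantities $\tau,U,V,S,T$ of (\ref{taouvst}). Writing $\phi=(\alpha_1+\beta_1+\gamma_1)/2$ and putting every factor $1\pm e^{i\theta}$ and $e^{i\theta}\pm e^{i\theta'}$ into half-angle form, I expect each quantity to split off a common $e^{i\phi}$ and reduce to a simple expression in $z$ and $z_1$. Concretely I would obtain $\tau=4e^{i\phi}\bar z$, $S=-4e^{i\phi}z$, $U+V=4ie^{i\phi}z_1$ and $U-V=4ie^{i\phi}\bar z_1$, and hence $T=\tau(U+V)(U-V)=-64\,e^{3i\phi}\,\bar z\,|z_1|^2$. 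The pay-off of this step is that $a_2,c_2,b_2$ depend on $S,T,\tau$ only through $\sqrt{S/T}$ and $\tau\pm i\sqrt{T/S}$, so that after the factorisation all the $e^{i\phi}$ factors and the real moduli $|z|,|z_1|$ cancel in exactly the right places.

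Carrying this out, I would compute $\sqrt{S/T}=\tfrac14\,(z/|z|)\,e^{-i\phi}/|z_1|$, fixing the branch of the root by $\sqrt{z/\bar z}=z/|z|$, and substitute into the formulas of Lemma \ref{phaseclcg}. This yields $a_2=(z_1/|z_1|)(z/|z|)=e^{i(\arg z+\arg z_1)}$ and $c_2=(\bar z_1/|z_1|)(z/|z|)=e^{i(\arg z-\arg z_1)}$, giving $\alpha_2=\arg z+\arg z_1$ and $\gamma_2=\arg z-\arg z_1$. For $b_2$ the common factor $4e^{i\phi}\bar z$ cancels from numerator and denominator, leaving $b_2=(|z|+i|z_1|)/(|z|-i|z_1|)=z_2/\bar z_2=e^{2i\arg z_2}$ with $z_2=|z/z_1|+i$, i.e.\ $\beta_2=2\arg(|z/z_1|+i)$. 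In particular $|a_2|=|b_2|=|c_2|=1$, which is precisely the statement that the right-hand side is built from regular phases.

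The two degenerate clauses then follow immediately from the corresponding clauses of Lemma \ref{phaseclcg}: $\alpha_1=\gamma_1$ means $a_1=c_1$, forcing $a_2=c_2$ and so $\alpha_2=\gamma_2$ (equivalently $z_1$ is real, $\arg z_1\in\{0,\pi\}$); while $\alpha_1=-\gamma_1$ means $a_1c_1=1$, forcing $a_2=-c_2$ and so $\alpha_2=\pi+\gamma_2\ (\mathrm{mod}\ 2\pi)$ (equivalently $z_1$ is purely imaginary, $\arg z_1=\pm\tfrac{\pi}{2}$). I expect the only real subtlety to be bookkeeping: choosing the square-root branch consistently so that the moduli cancel exactly and the arguments land in $(0,2\pi)$. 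There is no genuine analytic difficulty here, because Lemma \ref{phaseclcg} already supplies the algebraic identity for arbitrary complex $a_1,b_1,c_1$; the corollary is just the observation that regular inputs produce regular outputs, together with the explicit trigonometric form of the three output angles.
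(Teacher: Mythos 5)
Your proposal is correct and follows essentially the same route as the paper's own proof: specialise Lemma \ref{phaseclcg} to $a_1=e^{i\alpha_1}$, $b_1=e^{i\beta_1}$, $c_1=e^{i\gamma_1}$, factor the common $e^{i(\alpha_1+\beta_1+\gamma_1)/2}$ out of $\tau,U,V,S,T$ via half-angle identities, and read off $a_2,b_2,c_2$ as unit-modulus numbers with the stated arguments. (Your sign $S=-4e^{i\phi}z$ is in fact the one consistent with the paper's subsequent use of $S/\tau=-z/\bar z$.)
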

\begin{proof}
In (\ref{colorspps}), let $\lambda_1=e^{i\alpha_1}, \lambda_2=e^{i\beta_1}, \lambda_3=e^{i\gamma_1}.$  Then for the values of $ U, V, S, \tau$ in (\ref{taouvst}) we have
$$
\begin{array}{ll}
U=4ie^{i\frac{\alpha_1+\beta_1+\gamma_1}{2}}\cos\frac{\beta_1}{2}\sin\frac{\alpha_1+\gamma_1}{2}, &
V=4e^{i\frac{\alpha_1+\beta_1+\gamma_1}{2}}\sin\frac{\beta_1}{2}\sin\frac{\alpha_1-\gamma_1}{2},\\
S=4e^{i\frac{\alpha_1+\beta_1+\gamma_1}{2}}z, & \tau=4e^{i\frac{\alpha_1+\beta_1+\gamma_1}{2}}\overline{z},
\end{array}
$$
where $z=\cos\frac{\beta_1}{2}\cos\frac{\alpha_1+\gamma_1}{2}+i\sin\frac{\beta_1}{2}\cos\frac{\alpha_1-\gamma_1}{2}$, $\overline{z}$ is the complex conjugate of $z$. 
Also, if we let $z_1=\cos\frac{\beta_1}{2}\sin\frac{\alpha_1+\gamma_1}{2}-i\sin\frac{\beta_1}{2}\sin\frac{\alpha_1-\gamma_1}{2}$,  then 
$$
U+V=4ie^{i\frac{\alpha_1+\beta_1+\gamma_1}{2}}z_1, ~~U-V=4ie^{i\frac{\alpha_1+\beta_1+\gamma_1}{2}}\overline{z}_1.
$$
Thus 
$$
\frac{U+V}{U-V}=\frac{z_1}{\overline{z}_1}=\frac{z_1^2}{|z_1|^2},  ~~\sqrt{\frac{U+V}{U-V}}=\frac{z_1}{|z_1|}=e^{i\theta}.
$$
where $|z_1|$ is the magnitude of the complex number $z_1$, and $\theta=\arg z_1\in [0, 2\pi)$ is the phase of $z_1$.
Similarly, we have 
$$
\sqrt{\frac{z}{\overline{z}}}=\frac{z}{|z|}=e^{i\phi},
$$ 
where $\phi=\arg z\in [0, 2\pi)$ is the phase of $z$.
Therefore,
$$
\begin{array}{ll}
\sigma_1=-i(U+V)\sqrt{\frac{S}{T}}=-i(U+V)\sqrt{\frac{S}{\tau(U+V)(U-V)}}&\\
=-i\sqrt{\frac{S}{\tau}\frac{(U+V)}{(U-V)}}=-i\sqrt{-\frac{z}{\overline{z}}\frac{z_1}{\overline{z_1}}}=-iie^{i\phi}e^{i\theta}=e^{i(\phi+\theta)},
\end{array}
$$
$$
\begin{array}{ll}
\sigma_3=-i(U-V)\sqrt{\frac{S}{T}}=-i(U-V)\sqrt{\frac{S}{\tau(U+V)(U-V)}}&\\
=-i\sqrt{\frac{S}{\tau}\frac{(U-V)}{(U+V)}}=-i\sqrt{-\frac{z}{\overline{z}}\frac{\overline{z_1}}{z_1}}=-iie^{i\phi}e^{-i\theta}=e^{i(\phi-\theta)},
\end{array}
$$
$$
\begin{array}{ll}
\sigma_2=\frac{\tau+i\sqrt{\frac{T}{S}}}{\tau-i\sqrt{\frac{T}{S}}}=\frac{\tau\sqrt{\frac{S}{T}}+i}{\tau\sqrt{\frac{S}{T}}-i}=\frac{\sqrt{\frac{S\tau^2}{\tau(U^2-V^2)}}+i}{\sqrt{\frac{S\tau^2}{\tau(U^2-V^2)}}-i}\\
=\frac{\sqrt{\frac{S\tau}{(U+V)(U-V)}}+i}{\sqrt{\frac{S\tau}{(U+V)(U-V)}}-i}=\frac{\sqrt{\frac{z\overline{z}}{z_1\overline{z_1}}}+i}{\sqrt{\frac{z\overline{z}}{z_1\overline{z_1}}}-i}=\frac{|\frac{z}{z_1}|+i}{|\frac{z}{z_1}|-i}=\frac{z_2}{\overline{z_2}}=(\frac{z_2}{|z_2|})^2=e^{i2\varphi},
\end{array}
$$
where $z_2=|\frac{z}{z_1}|+i, ~\varphi=\arg z_2$ is the phase of $z_2$.  Let $\alpha_2=\phi+\theta,  \beta_2=2\varphi, \gamma_2=\phi-\theta$.
Apparently, if $\alpha_1=\gamma_1$, then $V=0,$ i.e., $e^{i\theta}=1,$ thus $\theta=0$. It follows that 
$\alpha_2=\gamma_2$. If $\alpha_1=-\gamma_1$, then $U=0,$ thus
 $e^{i(\phi+\theta)}=-e^{i(\phi-\theta)}=e^{i(\pi+\phi-\theta)}$, i.e.,  $e^{i\alpha_2 }=e^{i(\pi+\gamma_2)}$. 
 Thus $\alpha_2= \pi+\gamma_2$. 
\end{proof}


%
%
%
%
%
%
%
%
%

\end{document}